\newlength{\dummylen}
\def\ind{\parindent}
\def\denseitems{
  \itemsep1pt plus1pt minus1pt
  \parsep0pt plus0pt
  \parskip0pt\topsep0pt} 
\def\OB#1{\ifmmode#1\else\mbox{$#1$}\fi}
\newcommand{\prog}[1]{\texttt{\mbox{#1}}}
\title{Adaptation-Based Programming in Haskell}
\author{Tim Bauer
\institute{Oregon State University\\ Corvallis, Oregon, USA}
\email{bauertim@eecs.oregonstate.edu}
\and
Martin Erwig
\institute{Oregon State University\\ Corvallis, Oregon, USA}
\email{erwig@eecs.oregonstate.edu}
\and
Alan Fern
\institute{Oregon State University\\ Corvallis, Oregon, USA}
\email{fern@eecs.oregonstate.edu}
\and
Jervis Pinto
\institute{Oregon State University\\ Corvallis, Oregon, USA}
\email{pinto@eecs.oregonstate.edu}
}
\begin{document}
\maketitle

\begin{abstract}
We present an embedded DSL to support adaptation-based programming (ABP)
in Haskell. ABP is an abstract model for defining adaptive values,
called \emph{adaptives}, which adapt in response to some associated feedback.
We show how our design choices in Haskell motivate higher-level combinators
and constructs and help us derive more complicated compositional adaptives.

We also show an important specialization of ABP is in support of
reinforcement learning constructs, which optimize adaptive values based 
on a programmer-specified objective function. This permits ABP users to easily
define adaptive values that express uncertainty anywhere in their programs.
Over repeated executions, these adaptive values adjust to more efficient
ones and enable the user's programs to self optimize.

The design of our DSL depends significantly on the use of type classes. 
We will illustrate, along with presenting our DSL, how the use of type 
classes can support the gradual evolution of DSLs.
\end{abstract}

\section{Introduction}
Programmers are often faced with the situation where it is not clear how to
best write a program that optimizes an objective of interest. For example,
consider designing an intelligent opponent for a real-time strategy game.
Computer-controlled opponents are typically quite weak and predictable
compared to an experienced human.  This is not too surprising since it is very
difficult for a programmer to anticipate all situations that will occur and to
specify the best course of action in each case.

As yet another example, consider trying to optimize the runtime of a
satisfiability solver or other type of constraint solver. There are many
decision points in such programs, and the best way to make the decisions, with
respect to runtime, depends very much on the distribution of inputs to the
program. Often this distribution is not known to the programmer and/or it may
change over the lifetime of the program. Even if the distribution were known,
the task of designing the best set of decision heuristics is quite daunting
and will often result in significant sub-optimality.  Unfortunately, standard
programming paradigms offer the programmer no choice but to completely specify
all such choices before program execution.

As another example, in the development of network control software it is
difficult to write complete programs that achieve close to optimal
performance. This is due to the dynamic, stochastic nature of networks leading
to uncertainty about the best values of parameters.

In this paper, we explore an embedded DSL to express \emph{adaptation-based
programming (ABP)}. In ABP, a programmer writes ``adaptive programs'' where
they are allowed
to explicitly specify their uncertainty by including ``adaptive values'' at the
program points where they do not know the best course of action.  In place of
specifying a concrete course of action, the programmer will be required to
specify an objective function that provides feedback about the quality of
program executions.  Given such an adaptive program, the adaptive values will
then be automatically adapted across program executions in an attempt to
optimize the specified objective. For example, the objective might be score in
a real-time strategy game, and the adaptive values might dictate which of some
number of strategies to employ in a specific game situation, or program
runtime might be the objective, and the adaptive values dictate choices among
different data structures and/or algorithmic choices. Provided that adaptive
programs can be optimized in an effective, automatic way, the ABP paradigm has
the potential to save significant development time and produce closer to
optimal program performance.

In the context of this paper, Haskell serves as an appropriate host language
for our embedded DSL. Haskell provides abstractions that facilitate the easy
experimentation with language ideas. Its type system forces us to be precise
in the description of language constructs while offering enough flexibility to
describe elements in their most general form. In particular, type classes
together with type functions \cite{SPC08} provide an elegant way of
formulating the notion of adaptive values.

Our DSL is defined around a type class and multiple functions that transform
and operate on instances of it. Programs of the DSL consist of instances of
this type class and allow the user to specify uncertainty.
We also provide template DSL programs for common patterns in the form of
generic instances such as adaptive pairs and functions as well as operations
supporting various patterns of evolution and adaptation.

As outlined in Section \ref{sec:relaw} there has been a small amount of prior
work in the Artificial Intelligence community on ABP under various names, most
notably partial programming \cite{Andre02}.  However, ABP has not yet been
studied as a general programming paradigm from a programming-language
perspective. It has been employed only by Artificial Intelligence experts on a
limited number of problems. 
This paper formalizes the ABP paradigm through an executable definition in
Haskell. This formalization is also likely to suggest unforeseen usage
patterns of ABP.
The main contributions of this paper include:

\begin{enumerate}[(1)]\denseitems
\item
Identification
of adaptive values as a foundation for adaptation-based programming and
their formalization through a corresponding Haskell type class.
\item
The definition of specific instances of adaptive
values, with intuitive interpretations, to be used as building blocks for
adaptive programs. In many cases these building blocks can draw on machine
learning theory to provide formal guarantees regarding their adaptation
behavior.
\item
Identification and definition of adaptable computation patterns
that are likely to arise in common practice.
\item
A formal convergence result for that provides a guarantee for the convergence
and optimality of a specific class of adaptive computations.
\item
A report on some practical experiments that illustrate the potential utility
of adaptive programming.
\end{enumerate}
The remainder of this paper is structured as follows.
In Section \ref{sec:aval} we introduce the notion of adaptive values and
define the interface to adaptive values through type classes.
The use of adaptive values to build adaptive computations is demonstrated in
Section \ref{sec:acomp}. We will identify adaptive computation patterns that
correspond to standard procedures in machine learning and those that are
likely to arise in some typical uses of ABP.
In Section \ref{sec:abehave} we present functions to monitor and control
adaptive computations.
In Section \ref{sec:convergence} we present a convergence result and discuss
the optimality of adaptive computations.
Section \ref{sec:case} provides some empirical results for the application of
ABP.
Related work is discussed in Section \ref{sec:relaw}, and
finally Section \ref{sec:concl} concludes and suggests future work.

\section{Adaptive Values}
\label{sec:aval}

The usual understanding of a value is that of a constant, unchanging object.
In contrast, an adaptive value can change over time. Changes to an adaptive
value are determined by feedback gathered from the context in which it is
used.

To facilitate a meaningful, controlled adaptation an adaptive value of type
\prog{v} needs to be represented, in general, by a somewhat ``richer'' type
\prog{a}, that is, type \prog{a} allows the extraction of values of type
\prog{v}, but also contains enough information to support interesting forms of
adaptation.

We call \prog{a} the \emph{representation type} and \prog{v} the \emph{value
type} of \prog{a}.
The adaptation is controlled by values of another type \prog{f}, called the
\emph{feedback type} of \prog{a}.
In the following we  call an adaptive value \emph{adaptive} for short to avoid
ambiguities between an adaptive value and the ``value of an adaptive value'',
that is, we simply say that \prog{x~::~a} is an adaptive and
\prog{value~x~::~v} is the value of (the adaptive) \prog{x} (\prog{value} will
be defined in Section \ref{sec:simple}).

In Section \ref{sec:simple} we describe the definition and examples of basic
adaptives, that is, adaptives defined directly on specific representation
types.
In Section \ref{sec:derived} we discuss obvious ways of obtaining compound
adaptives through derived instances for type constructors. A particularly
useful instance of this is the derived instance for function types that leads
to \emph{contextual adaptives} to be discussed in Section \ref{sec:context}.
In Section \ref{sec:nesting} we describe how to construct new adaptives
through nesting.

\subsection{Defining Adaptives}
\label{sec:simple}

The described concept of adaptives can be nicely captured by the
following Haskell type class.

\begin{program}
class Adaptive a where
  type Value a
  type Feedback a
  value :: a -> Value a
  adapt :: Feedback a -> a -> a
\end{program}
This class constitutes the core of our DSL: the operation \prog{value}
retrieves the current value from the representation, and the
function \prog{adapt} takes a feedback value and an adaptive and
produces a new adaptive. We represent points of uncertainty in our
program as instances of this class.

To define an adaptive representation type, a programmer has to provide an
instance definition for the class \prog{Adaptive}, which requires
%
%
\begin{itemize}\denseitems
\item implementations for the functions \prog{value} and \prog{adapt}, and
\item a definition of the corresponding value and feedback types
\end{itemize}
The value and feedback types are associated with the representation type
\prog{a} through the type functions \prog{Value} and \prog{Feedback},
which allows a large degree of flexibility in defining the adaptive
behavior \cite{SPC08}.

There are more things that we ultimately might want to store for adaptive
values for practical purposes (for example, statistics about usage, feedback,
and adaptation/adaptive behavior).
We will consider this aspect later in Section \ref{sec:abehave}.

As a simple example program we consider a form of incremental linear regression.
In particular, we want to learn the equation of a line $y = mx + b$
given a sequence of sample data points
$(x_1,y_1),(x_2,y_2),\ldots$.
The goal is to converge to an $m$ and $b$ that minimize the squared error
of predicting $y_i$ given $x_i$.

The adaptive for this example could be defined as follows.
First, we define the slope/intercept representation of lines.

\begin{program}
type Slope     = Double
type Intercept = Double

data Line  = L Slope Intercept
type Point = (Double,Double)
\end{program}
Based on this representation we can define the line adaptive as follows.

\begin{program}
instance Adaptive Line where
  type Value    Line = Line
  type Feedback Line = Point
  value = id
  adapt (x,y) (L m b) = L m' b'
        where m' = m + eta*x*(y - y0)
              b' = b + eta*(y - y0)
              y0 = m*x + b
              eta = 0.01
\end{program}
We can observe that the value of this particular adaptive
is just the same as the representation. 
The feedback is provided in the form of individual points,
each of which leads to an update of slope and intercept as defined by the
expressions for \prog{m'} and \prog{b'}. The value \prog{eta} represents the
learning rate, which is how much new inputs influence the adaptation.

As another example, consider the game of Rock-Paper-Scissors, in which two
players simultaneously choose one of three values \prog{Rock}, \prog{Paper}, or
\prog{Scissors}, trying to beat the opponent.

\begin{program}
data Move = Rock | Paper | Scissors
\end{program}
The winning move against each move is defined by the
following function \prog{win}.

\begin{program}
win :: Move -> Move
win Rock     = Paper
win Paper    = Scissors
win Scissors = Rock
\end{program}

It turns out that, given a fixed opponent, this game is a specific instance of
a so-called ``multi-armed bandit'' problem. This is a classic problem, first
described by Robbins \cite{Robbins52}, which captures the essential elements of
many experimental design problems, among others. The
problem can be viewed as modeling the process of playing a slot machine with
multiple arms, where each arm has an unknown distribution over random payoffs.
At each time step the player must select an arm to pull based on information
gathered from previous pulls, upon which a randomized return from the selected
arm is received. The goal is to develop an arm-pull strategy that maximizes
some measure of the expected payoff sequence over time, e.g.\ maximizing the
expected temporally-averaged payoff. In the case of Rock-Paper-Scissors with a
fixed opponent strategy, the arms correspond to the selection of either rock,
paper, or scissors, and the payoff reflects whether the selected move won or
lost against the selection of the opponent at that time step.

A good bandit strategy must balance the exploitation-exploration tradeoff,
which involves deciding whether to exploit the current knowledge and pull the
arm that currently looks best, or to explore other arms that have been tried
fewer times in the hope of discovering higher payoffs.

There are well known lower bounds on the performance of the best possible
strategy and bandit strategies that achieve those bounds asymptotically
\cite{Lai85}.  More recent work \cite{Auer02} has developed an upper
confidence bound (UCB) strategy, which was shown to achieve the lower bound
uniformly over all finite time periods.
Below, we describe a multi-armed bandit adaptive based on UCB.

In our representation of a multi-armed bandit we store a map that gives for
each arm how often it was pulled and the total rewards collected with it. The
representation is parameterized by the type used to represent the bandit's
arms.

\begin{program}
type Reward = Float
type Pulls  = Int
data Bandit a = Bandit (PlayMap a)
type PlayMap a = [(a,Pulls,Reward)]
\end{program}
The definition of the bandit adaptive has to return arm values (of type
\prog{a}) as values. The feedback is the arm that was pulled last together
with a reward that will be added to the total reward of that arm in the map.

We define the helper function \prog{updPM} to update the play map for a
given arm in some generic way.
\begin{program}
updPM :: Eq a => (ArmInfo a -> ArmInfo a) -> a -> PlayMap a -> PlayMap a
updPM _ _ []                 = []
updPM f x (a:as) | fst3 a==x = f a:as
                 | otherwise = a:updPM f x as

fst3 (x,_,_) = x
\end{program}

\noindent
With these definitions we can define a multi-armed bandit as an instance
of an adaptive.

\begin{program}
instance Eq a => Adaptive (Bandit a) where
  type Value    (Bandit a) = a
  type Feedback (Bandit a) = (a,Reward)
  adapt (a,r) (Bandit m) = Bandit (addReward r a m)
    where addReward :: Eq a => Reward -> a -> PlayMap a -> PlayMap a
          addReward x = updPM (\(a,p,r)->(a,p+1,r+x))
\end{program}
What remains to be defined is the \prog{value} method, for which we use the
UCB bandit algorithm. This approach first selects any arm
that has not been pulled before, which is achieved by the function
\prog{zeroPulls}, and otherwise selects the arm with the highest upper
confidence bound. This measure is defined for an arm $i$ that has been pulled
$n_i$ times and has a reward sum of $r_i$ as $r_i/n_i +\sqrt{\log n/n_i}$
where $n=\sum_{i}n_i$.

\begin{program}
  value (Bandit m) = a
      where ((a,_,_,):_) = zeroPulls ++ sortDesc ucb m
            zeroPulls    = filter ((==0) . pulls) m
            n            = fromIntegral (sum (map pulls m))
            ucb (_,p,r)  = r/ni + sqrt (log n/ni) where ni = fromIntegral p
            pulls (_,p,_) = p
\end{program}
The above function extracts arm \prog{a} by first choosing any arm that
has not been pulled (from \prog{zeroPulls}). If all arms have been
pulled, then it chooses the maximum value according to the UCB computation
given above. The function \prog{sortDesc} sorts a list in descending order of values as
obtained by the parameter function \prog{ucb}.

It is illustrative to note how the above UCB-based implementation of
\prog{value} manages the exploration-exploitation tradeoff. Assuming that all
arms have been pulled at least once, the decision is based on the upper
confidence bound, which is composed of two terms. The first term $r_i/n_i$ can
be viewed as encouraging exploitation since it will be larger for arms that
have been observed to be more profitable on average. Conversely, the second
term encourages
exploration since it grows with the total number of arm pulls, causing it to
overwhelm the first term if an arm has not been pulled very often. However,
the exploration term vanishes very quickly for an arm as its number of pulls
increases causing its evaluation to be based solely on its observed returns.
The result is that low-payoff arms tend to get fewer pulls than those with
higher payoffs over time, as desired.

The \prog{Bandit} instance is a generic operation in our DSL, it can be
utilized by many consumer programs. We illustrate one such use by coming
back to our Rock-Paper-Scissors example and instantiating the bandit as
an adaptive strategy for playing the game.
%

\begin{program}
type Strategy = Bandit Move
\end{program}
\begin{program}
initStrat :: Strategy
initStrat = Bandit [(m,0,0) | m <- [Rock, Paper, Scissors]]
\end{program}
We can use the following function \prog{score} to translate wins and
losses into numerical feedback.

\begin{program}
score :: Move -> Move -> Int
score m m' | win m  == m' = -1
           | win m' == m  =  1
           | otherwise    =  0
\end{program}
We can then pair \prog{initStrat} with other strategies and observe how it adapts
guided by the feedback values produced from \prog{score} applied to the moves
produced by \prog{value} and the opponent's move. We will do this in Section
\ref{sec:acomp} where we will identify and define adaptation computation
patterns that allow us to define applications (such as, line regression or
Rock-Paper-Scissors tournaments) that employ the defined adaptives.

One final note regarding the feedback employed for the multi-armed bandit: The
theoretical optimality result assumes the rewards are in the range $[0..1]$.
To adjust the \prog{Bandit} adaptive to the feedback produced by \prog{score}
we just needed to multiply the \prog{sqrt} term by 2.  However, in this example
the optimal behavior is not affected even if we don't scale the rewards since
all we are interested in is average reward.

\subsection{Derived Adaptives}
\label{sec:compound}
\label{sec:derived}
We define adaptation of generic structures in DSL by defining derived instances
of \prog{Adaptive}. This gives us instances of for adaptives for
many common patterns in adaptive programs.



As a first example, we define a derived instance of \prog{Adaptive} for pairs,
which realizes the parallel adaptation of two values in a synchronized
fashion.

\begin{program}
instance (Adaptive a,Adaptive b) => Adaptive (a,b) where
  type Value (a,b) = (Value a,Value b)
  type Feedback (a,b) = (Feedback a,Feedback b)
  value (x,y) = (value x,value y)
  adapt (u,v) (x,y) = (adapt u x,adapt v y)
\end{program}
One example use of this is the parallel adaptation of two
competing or even cooperating adaptive strategies in a game.
For instance, an AI or agent might have two goals that need to be
satisfied concomitantly. Then two \prog{Bandit}s, one adapting to
each goal automatically form a more complex agent that addresses both
with no additional programming.

Another example use of this particular construct will be given
in Section \ref{sec:acomp} where we can derive a co-evolution
computational pattern from a simple evolution pattern by using
this class instance definition.

We can also obtain an \prog{Adaptive} definition for lists. In this
definition, each adaptive's feedback value is used exclusively for that
adaptive.

\begin{program}
instance Adaptive a => Adaptive [a] where
  type Value [a] = [Value a]
  type Feedback [a] = [Feedback a]
  value = map value
  adapt = zipWith adapt
\end{program}
This definition can be generalized to any \prog{Functor} type constructor,
because we can easily define a corresponding \prog{fzipWith} function.


\subsection{Contextual Adaptives}
\label{sec:context}

A frequent scenario is to extend a given adaptive by context. For example, the
best arm to pull for a multi-armed bandit may depend on the time of day. Such
a context extension can be very conveniently achieved through the derived
\prog{Adaptive} instance for function types.
The idea is to turn an adaptive for some type \prog{a} into an adaptive for
functions from some context \prog{c} into \prog{a}. The value type of such an
adaptive function is a function from context into values of the original
adaptive \prog{a}, and feedback is given by feedback for \prog{a} enriched by
context information.
%
Contextual adaptive values are obtained in two steps. First, apply
the function to contextual information \prog{x}, and then extract
the value of that result. Adaptation based on a feedback \prog{(x,v)}
constructs an updated function that overrides input \prog{x} to map
to the adapted result of \prog{(f x)} with feedback \prog{v}. All other
inputs are delegated to the old function.
%
This definition illustrates that the functional adaptive essentially maintains
a number of separate copies of the original adaptive.

\begin{program}
instance (Eq c,Adaptive a) => Adaptive (c -> a) where
  type Value    (c -> a) = c -> Value a
  type Feedback (c -> a) = (c,Feedback a)
  value f = \x->value (f x)
  adapt (x,v) f = \y->if x==y then adapt v (f x) else f y
\end{program}
%
%
The definition for \prog{value} could be given more succinctly as
\prog{(value~.)}, but we think the above definition is easier to understand
and explains better what is going on.

This derived instance effective expands our DSL to support function types
transparently.

Note that this \prog{Adaptive} instance definition can be easily generalized
to a whole class of context type constructors, of which \prog{->} is one
example. A mapping type is another example, which might be preferable for
efficiency reasons.

%
%
As a concrete example we can add a player context to the multi-armed bandit
representing the Rock-Paper-Scissors player, which then allows the adaptive to
learn different strategies against different players.
\begin{program}
data Opponent = Jack | Jill deriving Eq

flexible :: Opponent -> Strategy
flexible = \_ -> initStrat
\end{program}
Note that this context-dependent strategy is obtained for free since it is
based on the automatically derived instance of \prog{Adaptive} for function
types.
For either player, the initial strategy is used, but as the function
receives feedbacks it will adapt more specialized strategies for each
player (input).

\subsection{Nested and Recursive Adaptives}
\label{sec:nesting}

Another way in which adaptives can be combined into more complex adaptives is
through nesting, that is, the value of one adaptive is another adaptive.  In
such a nested adaptive, value selection and adaptation happens on two levels.
While an ``ordinary'' adaptive represents an evolving decision, a nested
adaptive represents a sequence of such decisions.

To work effectively with nested adaptives it is not sufficient to simply place
one adaptive as a value into another one, because adaptation of the nested
adaptives would be impossible. The \prog{adapt} function for the outer
adaptive would simply adjust the selection of the nested adaptive. Although a
nested adaptive that is obtained by the \prog{value} function of the outer
adaptive can be adapted, there is no mechanism to put this changed adaptive
back into the outer one.

Therefore, we define a subclass of \prog{Adaptive}, called \prog{Dedaptive}, to
represent \emph{dependent adaptives}. These contain an extended value function
\prog{valueCtx}, which returns the value plus the context where it was found.
This context is a function that allows the value, or an adapted version of it
for that matter, to be put back into the containing adaptive.
The class also contains a function \prog{propagate} that allows the derivation
of feedback for the outer adaptive from feedback for the nested
one.
The additional first parameter of type \prog{a} serves two purposes: First, it
is needed to resolve the overloading of \prog{propagate}, and second it
provides a context of values to properly derive feedback, because in some
situations, the feedback type contains more than just an external value, but
also information related to the adaptive type.

\begin{program}
class (Adaptive a,Adaptive (Value a)) => Dedaptive a where
  valueCtx  :: a -> (Value a,Value a -> a)
  propagate :: a -> Feedback (Value a) -> Feedback a
\end{program}
Note that the dependency in nested adaptives goes both ways: The nested
adaptive depends as a value on the outer adaptive, while the outer adaptive's
adaptation is in part controlled, through \prog{propagate}, by the nested
adaptive.

As an example we can consider a nested multi-armed bandit. The nested bandit
could be a Rock-Paper-Scissors game or actually a gambling machine, while the
outer bandit could represent, for example, the decision at which time to play.

In the instance definition of \prog{Dedaptive}, the function \prog{valueCtx}
is based on the outer \prog{value} function to find the value. The context is
then simply obtained by isolating that value in a list and producing a
function that can insert an element in its place.
Since the feedback for a bandit of type \prog{a} is given by values of type
\prog{(a,Reward)}, we can produce feedback for the outer bandit simply by
pairing the reward provided for the nested one with the current value of the
outer one.

\begin{program}
instance (Eq a,Eq (Bandit a)) => Dedaptive (Bandit (Bandit a)) where
  valueCtx b@(Bandit m) = (a,\y->Bandit (xs++(y,p,r):ys))
           where a = value b
                 (xs,(_,p,r):ys) = break ((==a).fst3) m
                 fst3 (x,_,_) = x
  propagate b (_,r) = (value b,r)
\end{program}
%
We can now create a nested adaptive as follows.

\begin{program}
dependent :: Bandit Strategy
dependent = Bandit [(initStrat,0,0),(initStrat,0,0)]
\end{program}
It seems that \prog{dependent} is very similar to \prog{flexible}, and in
fact, we can simulate contextual adaptives by nested adaptives. However,
nested adaptives are more general since we can nest different adaptives (of
the same type) if we want, which is not possible for contextual adaptives.
This situation is reminiscent of the relationship between dependent and
independent products in type theory \cite{Tho91}.

Nested adaptives also raise the question of ``nested values'', that is, when we
want to get the value of a dedaptive, we in many cases do not want to have the
immediate value, which is itself an adaptive, but rather the ``ultimate''
value, that is, the value of the nested adaptive. This can be easily computed
by the function \prog{nestedValue}.

\begin{program}
nestedValue :: Dedaptive a => a -> Value (Value a)
nestedValue = value . value
\end{program}
%
%
%

\section{Programs for Adaptive Computation}
\label{sec:acomp}

The idea behind our adaptation DSL is the gradual evolution of values
to improve a programmatic solution to a problem.
This view requires that an adaptive computation, that is, a computation that
contains adaptive values, is performed repeatedly so that feedback, often
obtained from the results of the computation, is used to evolve the adaptives
employed in the computation.

Under this view, an adaptive computation has to contain (repeated) calls to
\prog{adapt} functions, and we can distinguish different adaptive computation
patterns based on the relationship of these adaptation steps with other
computations.

One of the most basic adaptation operations in our DSL is given by
the \prog{adapt} function itself, namely the one-step adaptation of an
adaptive. More complex patterns can be obtained by considering different
forms of repeated adaptation.

What is the result of an adaptive computation? Is it the final adaptive or the
trace of values that can be obtained from the list of all intermediate
adaptives, or both, or something else entirely?
For generality we define combinators for adaptive computation patterns to
return the list of all adaptives produced during the adaptation.
%
%
From this list we can easily obtain the final adaptive through the list
function \prog{last} or the trace of represented values through the function
\prog{valuesOf}, which is defined as follows.

\begin{program}
valuesOf :: Adaptive a => [a] -> [Value a]
valuesOf = map value
\end{program}
Other inspection and debugging functions for sampling or aggregating can be
added quite easily through ordinary list processing functions.

\subsection{Adaptation Combinators}
\label{sec:online}

One of the most basic adaptation patterns is to train an adaptive by a
list of training values analogous to supervised learning \cite{Bishop06}.
%
This is realized by the function \prog{trainBy} below.

\begin{program}
trainBy :: Adaptive a => a -> [Feedback a] -> [a]
trainBy = scanl adaptBy

adaptBy :: Adaptive a => a -> Feedback a -> a
adaptBy = flip adapt
\end{program}

\noindent
The {\tt scanl} function returns a list of all intermediate results
as a leftward fold is applied to a list. Here it will adapt an
initial adaptive in sequence and return the list (stream)
of all intermediate adaptives.

A more dynamic scenario is captured by the function \prog{evolve} that
uses its function parameter to compute feedback from the values of an
adaptive.

\begin{program}
evolve :: Adaptive a => (Value a -> Feedback a) -> a -> [a]
evolve f x = x:evolve f (x `adaptBy` (f (value x)))
\end{program}
The function \prog{evolve} represents a form of online learning
\cite{Bishop06} where the adaptive can be viewed as alternating between making
a decision (producing a value), getting feedback, and then adapting. The
bandit problem is a classic example of online learning, though there are many
other instances in the literature.
%


A generalization of \prog{evolve} is obtained by evolving two adaptives
in parallel where the values of both adaptives are the basis for
feedback to either one of the adaptives.
This definition makes use of the \prog{Adaptive} instance for pairs shown in
Section \ref{sec:derived}.
The function \prog{distr} makes the values of both adaptives available to
compute feedback.

\begin{program}
coevolve :: Adaptive (a,b) => (Value a -> Value b -> Feedback a,
                               Value a -> Value b -> Feedback b)
                               -> (a,b) -> [(a,b)]
coevolve = evolve . distr

distr :: (a -> b -> c,a -> b -> d) -> (a,b) -> (c,d)
distr (f,g) (x,y) = (f x y,g x y)
\end{program}
The adaptation pattern defined by \prog{coevolve} corresponds to the structure
of multi-agent reinforcement learning \cite{Littman94},
an area of reinforcement learning that studies situations where multiple
agents are learning simultaneously, possibly interacting with one another
either cooperatively or as adversaries.

As an example we consider the implementation of a Rock-Paper-Scissors
tournament. In addition to players, such as \prog{initStrat} described in Section
\ref{sec:simple}, we need functions to produce feedback values from the values
of two players. One such function is \prog{myScore}.

\begin{program}
myScore :: Move -> Move -> (Move,Reward)
myScore x y = (x,score x y)
\end{program}
Since different player adaptives might have other feedback types, we generally
need other functions as well. For example, a simple Rock-Paper-Scissors
strategy is to always play the move that wins against the last move of the
opponent.

\begin{program}
data BeatLast = BL Move

instance Adaptive BeatLast where
  type Value    BeatLast = Move
  type Feedback BeatLast = Move
  value (BeatLast m)   = m
  adapt m (BeatLast _) = BL (win m)
\end{program}
%
Recall \prog{coevolve} uses the value of both adaptives to produce the
corresponding feedback value for the adaptive. The function below
can be used to select the opponent's move from the previous round
and fits nicely with the above strategy.

\begin{program}
opponent'sMove :: Move -> Move -> Move
opponent'sMove _ y = y
\end{program}
Or consider a smarter strategy that plays the move that beats its
opponent's most frequently played move. This player maintains a
count that each move has been played.
\begin{program}
data Max = MP [(Move,Int)]
           deriving Show

instance Adaptive Max where
  type Value    Max = Move
  type Feedback Max = Move
  value (MP ms) = win (fst (maxWrt snd ms))
  adapt m (MP ms) = MP (updF m (+1) ms)
\end{program}
The function \prog{updF} updates a mapping in a list of pairs.
\begin{program}
updF :: Eq a => a -> (b -> b) -> [(a,b)] -> [(a,b)] 
updF x f []                     = []
updF x f ((y,w):as) | x==y      = (x,f w):as
                    | otherwise = (y,w):updF x f as
\end{program}
We can now define players as pairs of adaptive values plus their
corresponding feedback-producing functions.

\begin{program}
bandit = (initStrat, myScore)
beatLast = (BL Rock, opponent'sMove)
maxMv = (MP [(m,0) | m<-rps], opponent'sMove)
\end{program}
To be able to play strategies with their corresponding feedback function
against one another, we introduce the following tournament function.

\begin{program}
vs :: (Adaptive b, Adaptive a) =>
        (a, Value a -> Value b -> Feedback a)
     -> (b, Value b -> Value a -> Feedback b)
     -> [(a, b)]
(a,f) `vs` (b,g) = coevolve (f,flip g) (a,b)
\end{program}
%
%
Tournaments can then be played using \prog{vs} in the obvious way, for
example:

\begin{program}
beatLast `vs` maxMv
\end{program}
This example leads as expected to an overall victory for the \prog{maxMv} player.

\subsection{Recursive Adaptation}
\label{sec:recursive}

In Section \ref{sec:nesting} we have considered nested adaptives, in which
value selection and adaptation happens on two or more levels.  While an
``ordinary'' adaptive represents an evolving decision, a nested adaptive
represents a sequence of such decisions.

When the number of nesting levels is not fixed and not known in advance, it is
difficult to capture this computational pattern in a single combinator. In
that case, adaptation and value retrieval must be performed by individual
function calls that are integrated into the recursive structure of an adaptive
algorithm.

As an example we consider the problem of learning a combination of sorting
methods.
The idea is based on the observation that for a specific kind of lists, one
sorting method performs better than others.

To learn a combination of sorting algorithms we have to abstract some property
of lists and store costs or rewards for each sorting method under
consideration in a table indexed by that property. Since some sorting methods
are recursive, this will lead to a recursive adaptation process in which
potentially different sorting methods can be chosen based on the respective
properties of lists decomposed during the sorting recursion.

For simplicity we consider here the length of the list as a
property.\footnote{We actually use the square root of the list length to keep
the size of the table reasonable.}
We can build this adaptive table in two steps. First, we define an adaptive
for sorting methods, from which we can then create a table by adding the list
size as context, as demonstrated in Section \ref{sec:context}.

\begin{program}
data SortAlg = MSort | ISort
type Cost = Double
data Action = Action [(SortAlg,Int,Cost)]
\end{program}
The base adaptive for sorting algorithms has essentially the same structure as
a multi-armed bandit (see Section \ref{sec:simple}): It stores the number each
method was chosen together with the cost (representing running time).
Here we consider two methods, namely insertion sort and merge sort.

The \prog{Adaptive} instance definition for \prog{Action} is also very similar
to that of \prog{Bandit}. The only differences are that \prog{value} selects
the smallest entry (that is, the on average fastest sorting method) and that
\prog{adapt} updates a running average of costs via the \prog{updAvg} function.
We also choose any action not sufficiently explored attempted ($8$ is used
as cutoff to decide this).

\begin{program}
instance Adaptive Action where
 type Value Action = SortAlg
 type Feedback Action = (SortAlg,Cost)

 value (Action as)
    | null unexplored    = fst3 $ minWrt thd3 as
    | otherwise          = fst3 $ head unexplored
    where unexplored = filter (\a -> snd3 a < 8) as

 adapt (a,c) (Action as) = Action $ updF3 a
                             (\(a',f',c') -> (a', f' + 1, runAvg f' c' c)) as
\end{program}
The function \prog{runAvg} updates a running average, \prog{minWrt} selects
the minimum element with regard to some criteria in our case the average
time a sorting method takes, and \prog{updF3} remaps a specific triple
in a list.
\begin{program}
runAvg f c' c = (fd * c' + c) / (fd + 1)
  where fd = fromIntegral f

minWrt :: Ord b => (a -> b) -> [a] -> a
minWrt f = head . sortBy (\x y->compare (f x) (f y))

updF3 :: Eq a => a -> ((a,b,c) -> (a,b,c)) -> [(a,b,c)] -> [(a,b,c)]
updF3 x f []                      = []
updF3 x f (a:as) | x == fst3 a    = f a : as
                 | otherwise     = a : updF3 x f as
\end{program}

%
%
To support unlimited recursive adaptives, we use the adaptive as the state of
a state monad, which can then be used to thread adaptives through arbitrary
computations.
To facilitate the computation of actual timings for the given application, we
use a state monad transformer that encapsulates the \prog{IO} monad. The
following general definition of a Q-table \cite{RL-book} abstracts from the
concrete types for state/context (\prog{s}) and actions (\prog{a}).

\begin{program}
type QTable s a r = StateT (s -> a) IO r
\end{program}
Note that the state of the state transformer monad is a function
that represents a contextual adaptive.
For our example we have as an adaptive a function from list sizes to sorting
method adaptives.

\begin{program}
type Size = Int
type ASort r = QTable Size Action r

asort :: Size -> [Int] -> ASort [Int]
asort n xs =
  do let s = isqrt n
     q <- readTable
     let m = value q s
     t <- readTime
     ys <- case m of
              ISort -> isort n xs
              MSort -> msort n xs
     forceEval ys
     t' <- readTime
     modify (`adaptBy` (isqrt n,(m,t-t')))
     return ys
\end{program}
Adaptation sort takes as input a list \prog{xs} and its size \prog{n}, which
is used to select the best sorting method for the list.
First, the Q-table is
read from the state using the function \prog{readTable}, which is simply
another name for the state monad function \prog{get} that retrieves the state
of the monad. The value of the adaptive Q-table is the function that maps
sizes to sorting methods. Based on the selected sorting method \prog{m},
which is obtained by applying the function \prog{value~q} to the integer
square root of \prog{s}, we either sort using insertion sort or merge sort.
After forcing the evaluation of the result list \prog{ys}, we adapt the
Q-table using the monadic state updating function \prog{modify} before
returning the sorted list.

The recursively called sorting functions are also defined within the context
of the monadic adaptive \prog{ASort} since, at least \prog{msort}, has to
recursively sort sublists (of smaller size). That sorting task should be
performed using the currently best method for those lists, and it should also
adapt the information stored in the Q-table.

\begin{programBR}
isort :: Size -> [Int] -> ASort [Int]
isort _ xs = return (foldr insert [] xs)

msort :: Size -> [Int] -> ASort [Int]
msort n xs =
  if n<2 then
     return xs
  else
     do let k = n `div` 2
        let (us,vs) = splitAt k xs
        us' <- asort k us
        vs' <- asort (n-k) vs
        return (merge compare us' vs')
\end{programBR}
In Section \ref{sec:case} we report some concrete timing results for this
application, and we will present another application that is also based on
recursive adaptation.

\subsection{Transactional Adaptations}
\label{sec:transact}

The adaptive pattern operations considered so far all progressed in a very
fine-grained fashion, by tightly interwoven calls of \prog{value} and
\prog{adapt}. Although these patterns seem natural
there might be cases in which adaptation is less tightly controlled.
For instance it is often convenient for a multi-armed bandit may to have
several arm pulls per reward (\prog{adapt}) call.

To illustrate this consider the following alternative representation of our
multi-armed bandit, which stores in addition to the map the last pulled arm.

\begin{program}
type ArmInfo a = (a,Pulls,Reward)
type PlayMap a = [ArmInfo a]

data Bandit a = Bandit a (PlayMap a)
\end{program}
In order to maintain this representation we have to use a different feedback
type that distinguishes two kinds of feedback: either (a) an arm was pulled,
in which case the corresponding pull counter is increased and the arm is
remembered as the last one pulled, or (b) a reward for the last pulled arm is
delivered, which will be added to the total reward of that arm in the map.
These two different forms of feedback are captured in the following type.

\begin{program}
data Play a = Pull a | Reward Reward
\end{program}
This leads to a slightly different \prog{Adaptive} instance definition than
the one shown in Section \ref{sec:simple}.

\begin{program}
instance Eq a => Adaptive (Bandit a) where
  type Value    (Bandit a) = a
  type Feedback (Bandit a) = Play a
  adapt (Pull a)   (Bandit _ m) = Bandit a (incPulls a m)
  adapt (Reward r) (Bandit a m) = Bandit a (addReward r a m)
\end{program}
The function \prog{incPulls} increments the number of pulls of the given
arm in the map, \prog{addReward} adds reward for a given arm.
The definition of \prog{value} remains unchanged and still uses
the UCB algorithm previously described.

Now consider what happens if we want to implement a Rock-Paper-Scissors
strategy on the basis of this representation and play it against some other
strategy.
The problem is that it now takes \emph{two} adaptation steps, a \prog{Pull} of
an arm and a \prog{Reward} for it, to make a meaningful adaptation transition
in the sense of machine learning.
Therefore, we need some form of ``big-step'' adaptation that can for this
example be derived from the adaptive's feedback as follows.

\begin{program}
bigStep :: Eq a => (a,Reward) -> Bandit a -> Bandit a
bigStep (x,r) b = b `transBy` [Pull x,Reward r]

transBy :: Adaptive a => a -> [Feedback a] -> a
transBy = foldl adaptBy
\end{program}
The point to observe is that we have converted a value of type
\prog{Feedback~a} into a function of type \prog{a~->~a}, which means that the
big-step adaptation pattern that  corresponds to \prog{trainBy} takes a list
of such functions instead of feedback values.

\begin{program}
transformBy :: a -> [a -> a] -> [a]
transformBy = scanl (flip ($))
\end{program}
%
Consider, for example, an adaptation of the following form.

\begin{program}
initStrat `trainBy` xs
\end{program}
The corresponding adaptation for the changed adaptive could be implemented
using \prog{transformBy} in the following way.
Here \prog{stratB} is the initial bandit value, defined in the same way as
\prog{initStrat} for the new \prog{Bandit} type.

\begin{program}
stratB `transformBy` map bigStep xs
\end{program}

As for \prog{trainBy} we can also produce a big-step version of
\prog{coevolve} by generalizing the type of the argument functions. The result
is a function that adapts two adaptives based on big-step adaptation
parameter functions that have access to both current adaptives.

\begin{program}
cotransform :: Adaptive (a,b) => 
               (a -> b -> a,b -> a -> b) -> (a,b) -> [(a,b)]
cotransform (f,g) (x,y) = (x,y):cotransform (f,g) (f x y,g y x)
\end{program}
An example would be the definition of a Rock-Paper-Scissors tournament for
adaptives as defined at the beginning of this section.

%
%
%
%

\section{Monitoring Adaptation Behavior}
\label{sec:abehave}
The lifetime of adaptive programs can often be split into two major phases:
(i) a \emph{learning} or \emph{adaptation phase} in which adaptives adapt
(significantly) and (ii) a \emph{stable phase} in which no or only minor
adaptations occur.
%
%
It might be desirable, for example if we are training an adaptive with
predefined feedback, to be able to detect this transition.

To determine whether an adaptive program is stable requires to monitor the
adaptives. To this end, we define a type \prog{Monitor} and a
corresponding function \prog{monitor} to produce observations about the
adaptation behavior.

\begin{program}
type Monitor a b = [a] -> b

monitor :: Adaptive a => Monitor a b -> [a] -> [b]
monitor m = map m . inits
\end{program}
The function \prog{inits} produces the list of all prefixes of a given list.


Here is an example monitor that ensures that a particular property holds for
the values of the $k$ last adaptives produced in an adaptation.

\begin{program}
ensureLast :: Adaptive a => Int -> ([Value a] -> Bool) -> Monitor a Bool
ensureLast n p xs = length xs >= n &&
                    p . map value . take n . reverse $ xs
\end{program}
%
%
A very simple example property to monitor is whether all the values in a list
are the same.

\begin{program}
allEq :: Eq a => [a] -> Bool
allEq []     = True
allEq (x:xs) = all (==x) xs
\end{program}
This property can be used to define a simple convergence criterion as follows.

\begin{program}
convergence :: (Adaptive a,Eq (Value a)) => Monitor a Bool
convergence = last 3 allEq
\end{program}
Using monitors we can define adaptation combinators that are controlled by the
monitors.

\begin{program}
until :: Adaptive a => [a] -> Monitor a Bool -> [a]
until xs = shiftMonitor ([],xs)

shiftMonitor :: ([a],[a]) -> Monitor a Bool -> [a]
shiftMonitor (xs,[]) m = if m xs then xs else []
shiftMonitor (xs,y:ys) m | m xs      = xs
                         | otherwise = shiftMonitor (xs++[y],ys) m
\end{program}
With \prog{until} we can now define self-controlling adaptations that adapt
until a certain criterion, such as \prog{convergence}, is met.

As a concrete example, consider again the linear regression scenario. We can
adapt a line \prog{l} using a list of points \prog{ps} until the last two
lines in the approximation sequence are close enough together, that is, their
difference in slope and intercept is smaller than a specific threshold.

\begin{program}
(l `trainBy` ps) `until` ensureLast 2 areClose

areClose :: [Line] -> Bool
areClose [L m b,L n c] = max (abs (m-n)) (abs (b-c)) <= 0.001
\end{program}

\section{Convergence and Optimality}
\label{sec:convergence}

One of the primary motivations for the ABP framework is to allow for
programs to automatically optimize their performance relative to
programmer-specified objectives. Thus, it is important to understand conditions
under which an adaptive program might converge to an optimal or approximately
optimal solution. Convergence of an adaptive program depends on statistical
properties of the adaptives and program inputs, as well as the structure of
the program. In general, understanding convergence issues is quite complex,
and we leave the general problem as future work. Instead below we take an
initial step in this direction for a particular type of adaptive used in a
restricted, but powerful, class of adaptive programs which we will call
\emph{single adaptive recursive functions (SARFs)}.

The definition of the SARF class of functions is inspired by the structure of
the adaptive sorting example. Specifically, SARFs are recursive functions
that possibly  call other functions, with the following three restrictions: 

\begin{enumerate}\denseitems
\item
There is a single adaptive in the entire program.
\item
The value of the adaptive
is used only once in the main function and used nowhere else in the program.
\item
For any instance of the adaptive and any function input, the function will
terminate in a finite amount of time (i.e. no infinite recursion).
\item
The
feedback is a numeric cost that is a function of the computation that took
place during the function call.
\end{enumerate}
Note that the adaptive sorting function is a
SARF, where the feedback corresponds to the time required for the function to
complete execution.

In order to study the convergence of SARF programs, we must first formalize
the notion of optimality.  For this purpose, we define optimality with respect
to an unknown but fixed probability distribution $D$ over possible inputs to
the SARF. For simplicity, we will also assume that there exists a finite upper
bound such that the probability of inputs larger than the bound is zero. For
example, in the adaptive sorting example, $D$ might be a distribution over
random lists up to some maximum size. Given a SARF $P$ and a distribution over
inputs $D$, we define $C(P,D)$ to be the expected cost of executing $P$ on
inputs drawn from $D$, where cost is as defined in $P$. We are interested in
adaptation processes such that $P$ will eventually achieve the optimal cost
with high probability after some number of adaptations. In particular, given
an initial $P$, we consider applying $P$ to a sequence of inputs drawn from
$D$, each time allowing it to adapt, ideally resulting in a version of $P$
that achieves the optimal expected cost.

Naturally, convergence depends on the choice of adaptive in a SARF. One option
would be to use a contextual bandit adaptive. It turns out that analyzing the
convergence of the resulting $P$ is quite complex due to the fact that the
quality of the decisions at higher levels of the recursion depend on decisions
at lower levels of the recursion, which would always be adapting in the case
of contextual bandits. We conjecture that convergence can be guaranteed for
the contextual bandit case, however, we leave it as future work. Here
we define a restricted class of adaptives, called \emph{principled adaptives},
that allows for an easier convergence proof.

Intuitively this adaptive will attempt to ``learn'' the quality of the actions
in a context from the bottom up with respect to the depth of the recursion.
Roughly speaking, the principled adaptive can be viewed as first learning the
quality of the actions for contexts corresponding to the recursion base cases.
Next, fixing those contexts to the best decision, learning proceeds to
contexts that are one level removed from a base case. Here the quality of each
action is judged conditioned on the fact that the base case decisions are
fixed and ideally optimal. Once these action qualities are learned well enough
they are in turn fixed and learning proceeds one level higher. Note that under
this strategy action qualities for a context are only learned, or updated,
when lower level decisions are fixed, rather than when the lower level is also
adapting.

More formally, the principled adaptive is similar to the adaptive sorting
adaptive in that it is based on a Q-table.  The key difference is the way that
it computes values and does the adaptation. The principled adaptive is
parameterized by an integer $t$, which we will call the learning threshold.
Our convergence results will specify sufficient values of this parameter.
First we introduce some terminology. We say that a context-action pair is
\emph{stable} if it has been updated at least $t$ times. We say that a context
is stable if all of its actions are stable. Intuitively, we will think of
stable context as one where we are quite confident that we know the correct
action to select. Given these definitions we can now specify the value and
adaptation function of the principled adaptive.

The value function returns the action that minimizes the Q-table (best action)
if the context is stable, and otherwise selects the first unstable action in
the context. According to this definition, a greedy action is only returned in
a particular context, if all of the actions in that context have been updated
at least $t$ times. In this sense, the value function is aggressive about
exploring all of the actions equally before settling on one of them. Now
suppose that our SARF has computed a value for the current input, resulting in
a cost that is used as feedback to the \prog{adapt} function of the principled
adaptive.  There are two cases that are handled. In the first case, if any of
the recursive calls involved a context that was not stable or the current
context-action pair has already been updated $t$ times, then no update is
performed.  Otherwise, if all recursive calls involved stable contexts then
the Q-table is updated based on the feedback for the appropriate context and
action that was selected. This involves updating the average cost observed for
the context-action pair.

Given a SARF with a principled adaptive we must make two assumptions to
guarantee convergence. First, we must assume that there is an ordering of
the contexts of the adaptive that strictly descends with the level of
recursion. That is, given a recursive call in context $c$, all lower-level
calls must correspond to contexts that are ordered lower than $c$. In this
case, we say that the adaptive has the \emph{descending context property}.
This property holds in the adaptive sorting example where list length is the
context and each recursive call decreases the list length. It also holds for
our budget optimization problem in Section \ref{sec:case}. The second
assumption, is the \emph{call-invariant cost assumption}. This assumption
means that for any context $c$, if all contexts ordered below $c$ are assigned
fixed actions, then the distribution over costs observed when taking each
action in $c$ is independent of the decisions in the higher-level contexts. In
other words, the recursive path taken to a particular context does not
influence the costs of the choices at that context.

We can now present the convergence result. First note that if there are $N$
contexts and $A$ actions, then the maximum number of Q-table updates that the
principled adaptive will perform is $t\cdot N\cdot A$. What our convergence
result states is that with high probability on inputs where no adaptation
occurs the optimal decisions will be made. In the following $\epsilon$ denotes
the minimum difference across all contexts between the expected cost of an
optimal action and the expected cost of the second best action with respect to
inputs drawn from $D$, and $\delta$ is the (user-selected) failure probability
of not being optimal.

\newtheorem{theorem}{Theorem}
\begin{theorem}
Let $P$ be a SARF with principled adaptive that has descending contexts and
call-invariant costs with respect to a fixed input distribution $D$. If the
learning threshold $t>4\epsilon^{-2} \ln\frac{NA}{\delta}$, then with
probability at least $1-\delta$, $P$ will behave optimally on all but $t\cdot
N \cdot A$ inputs drawn from $D$.
\end{theorem}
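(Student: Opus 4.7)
The plan is to combine a concentration inequality over all context-action pairs with a bottom-up induction along the context order supplied by the descending-context property.

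I would begin with the simple counting observation that drives the $t\cdot N\cdot A$ bound in the statement: each of the $N\cdot A$ context-action pairs can be updated at most $t$ times before it becomes stable, and by inspection of the \prog{adapt} rule a Q-table update is the only way an input can fail to be ``handled in the stable regime''. Hence on all but at most $tNA$ inputs the adaptive simply acts greedily at every context visited, and the theorem reduces to showing that, with probability at least $1-\delta$, the greedy action stored at every stable context coincides with the true optimum.

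For a single context-action pair $(c,a)$, the stored Q-value is an average of exactly $t$ cost samples. Assuming costs in a bounded range (otherwise a sub-Gaussian version of the same argument applies), Hoeffding's inequality bounds the probability that a single empirical mean deviates from its true mean by more than $\epsilon/2$ by $2\exp(-t\epsilon^{2}/2)$. A union bound over the $NA$ context-action pairs gives total failure probability at most $2NA\exp(-t\epsilon^{2}/2)$, and the hypothesis $t>4\epsilon^{-2}\ln(NA/\delta)$ forces this quantity below $\delta$. On the complementary event $\mathcal{E}$, for every context $c$ the empirical gap between the true best and second-best action is at least $\epsilon-2\cdot(\epsilon/2)>0$ when measured against the appropriate conditional cost distribution, so the argmin of the Q-row is genuinely optimal.

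The subtle step is that the $t$ samples collected at a higher-level context are i.i.d.\ from the \emph{right} distribution only after all lower-level contexts have been frozen. This is exactly the content of the descending-context property combined with call-invariant costs. I would formalize it by induction along the context order. In the base case (contexts that make no further adaptive-bearing recursive call), samples are immediately drawn under $D$ and the concentration argument of the previous paragraph directly certifies an optimal greedy choice on $\mathcal{E}$. For the inductive step, recall that the principled adaptive suppresses updates at $c$ whenever any lower-level recursive call visits an unstable context; therefore by the time the $t$-th update at $c$ is recorded, every lower context encountered is already stable and, by the inductive hypothesis on $\mathcal{E}$, locked onto its optimal action. Call-invariance then guarantees that the conditional cost distribution seen at $c$ depends only on the action chosen at $c$, so Hoeffding applies at $c$ and the greedy choice at $c$ is optimal on $\mathcal{E}$.

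The main obstacle I expect is not the concentration arithmetic but the bookkeeping that makes the induction rigorous: one must verify that stability is monotone (a stable context never receives further updates and thus never changes its greedy action), that stability propagates strictly upward in the context order, and that all samples attributed to a given $(c,a)$ pair are conditionally i.i.d.\ given the (by then frozen) optimal play at lower contexts. Once these invariants are in place, the event $\mathcal{E}$ of probability $\geq 1-\delta$ guarantees that every adapted context ends up optimal, and since no more than $tNA$ inputs can trigger any update, every subsequent input is processed under fully optimal decisions, which yields the stated bound.
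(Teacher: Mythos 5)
Your proposal is correct and follows essentially the same route as the paper's own sketch: a counting argument that at most $t\cdot N\cdot A$ inputs can trigger updates, a Chernoff/Hoeffding bound with accuracy $\epsilon/2$ union-bounded over the $NA$ context-action pairs, and an induction along the descending context order using call-invariance to justify that the samples at each context are drawn from the correct distribution once lower contexts are frozen at optimal actions. The only differences are cosmetic (you name Hoeffding where the paper invokes the additive Chernoff bound, and you spell out the stability-monotonicity bookkeeping the paper leaves implicit).
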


\begin{proof}
(Sketch) First note that the descending contexts assumption can
be used to show that whenever an input results in a computation that goes
through a non-stable context that some context-action pair in the Q-table will
be updated. This means that in all other cases the input will only go through
stable contexts and thus all actions selected for those inputs are ones that
are best as judged by the Q-table. Below we argue that with probability at
least $1-\delta$ that all such actions will correspond to optimal actions,
which will compute the proof.

Let $Q^*(s,a)$ be the optimal expected cost of action $a$ in context $s$ and
let $Q(s,a)$ be the cost of the adaptive for $s$ and $a$. Recall that
whenever the value of $Q(s,a)$ is updated that the decisions in all
lower-level contexts have been fixed. Consider the case where all of the lower
level contexts are fixed at optimal decisions. Under this assumption and the
call-invariance assumption we can use a Chernoff bound\footnote{%
Given a real-valued random variable $X$ bounded in absolute value by
$X_{\mathrm{max}}$ and an average $\hat{X}$ of $w$ independently drawn samples
of $X$, the additive Chernoff bound states
that with probability at least $1-\delta$, 
$|E[X]-\hat{X}| \leq X_{\mathrm{max}}\sqrt{-\ln\delta/w}$.}
%
to get that after $t$
updates of $Q(s,a)$, with probability at least $1-\delta'$,
%
%
$|Q^*(s,a)-Q(s,a)|\leq \sqrt{-\ln\delta'/t}$.
By setting $\delta'=\frac{\delta}{CA}$ we guarantee that the bound will hold
over all $s$ and $a$ with probability $1-\delta$. Using this value and the
bound on $w$ from the proposition we get that $|Q^*(s,a)-{Q}(s,a)|<
\frac{\epsilon}{2}$ with probability at least $1-\delta$, which by the
definition of $\epsilon$ implies that no sub-optimal action will be ranked
higher than an optimal action.

In the above we assumed that whenever $Q(s,a)$ is updated all lower-level
context made optimal decisions. Using a proof by induction on the ordering of
the contexts it is easy to use the above argument to prove that in each
context the optimal action will look best with probability at least
$1-\delta$, which completes the proof.
\end{proof}

\section{Empirical Results}
\label{sec:case}
Here we present empirical results for the application of ABP to two well
known problems, RL has been previously applied to: Sorting
\cite{Lagoudakis00} and budgeted optimization \cite{Ruv08}. Our framework is
able to naturally capture both problems, allowing for most of the details of
the adaptation process to be hidden from the programmer.

\textbf{Adaptive Sorting.} Prior work \cite{Lagoudakis00} on adaptive sorting used
RL to learn to choose between quicksort and insertion sort at each recursion
point based on the length of the list. The learned program showed small gains
in average runtime over pure quicksort and insertion sort.
We implemented adaptive sort using the structure shown in Section
\ref{sec:recursive} to learn a mixed strategy of insertion sort and
merge sort.\footnote{We used a tree-based map as a contextual adaptive instead
of functions for performance reasons.} 
We trained the algorithm on lists of integers of lengths up to $10000$.
The learned policy found a cutoff of just above $300$: For lists smaller than
that, insertion sort was faster, whereas for lists longer than the cutoff,
merge sort was faster.
Next we tested
our learned algorithm policies of just merge sort (no cutoff) and
merge sort with cutoffs off $10$ and $1000$.
The learned algorithm was considerably faster than just mergesort
with the other cutoffs we tested. For lists of size $10000$, we see
a speedup of between $1.6$ and $2.6$. Against merge sort with no
cutoff, the learned algorithm is $20$ times faster.

%
An important observation was that the cutoff learned only applies in the
environment it was learned.  That is, when we were learning the cutoff we were
accessing the system timer and modifying our adaptives as we sorted lists.
This overhead is necessarily included in the time we record to sort a sublist
(in \prog{asort}).
But if we sort in an environment without this overhead, the learned
cutoff does not apply, and a different one is optimal. In fact, tests
showed a very low cutoff (perhaps none) was fastest if there is
no overhead.

Whenever using time as a cost or reward, one must consider the fact that the
timing observations influence the results.
Although our adaptive framework is fairly
fast and efficient, the action being timed (sorting in our case) must be
significant compared to this overhead. In this sorting domain, the time to
sort a list was only significant for larger lists.

\textbf{Adaptive Budgeted Optimization.} We consider budgeted optimization where
we are given a function $f:\mathbb{R}^n \rightarrow \mathbb{R}^m$ and wish to
find the value of $x$ that minimizes the ``squared loss'' function $
L(x)=|f(x)|^2$. Furthermore, we are given a budget $B$ on the maximum
number of times that we are allowed to evaluate $f$ during the optimization
process. This situation of budgeted (or time-constrained) optimization occurs
mostly due to real-time performance requirements (for example in computer
vision and control problems).

We consider applying the standard Levenberg-Marquardt (LM) algorithm \cite{Lev44}
to this problem. LM is an iterative optimization algorithm that
starts at a random location $x_0$ and on each iteration evaluates the function
at the current $x_i$ and computes a new $x_{i+1}$. LM uses a mixture of
gradient descent and Gauss-Newton optimization to compute $x_{i+1}$. The
details of this computation are not particularly important other than the fact
that a key component of the algorithm is that each iteration must decide how
to best blend gradient descent and Gauss-Newton, which is done by specifying a
blending parameter $\lambda$. Marquardt \cite{MAR63} proposed a simple way to
modify $\lambda$ by increasing $\lambda$ by a particular factor (putting more
weight on gradient descent) when the previous iteration increased the loss,
and decreasing $\lambda$ otherwise (giving more weight to Gauss-Newton). This
$\lambda$ control works well and can be found in most implementations.

In \cite{Ruv08}, the authors apply reinforcement learning (RL) methods to
learn a controller for $\lambda$ and show that it is possible to obtain a
small improvement with respect to reduction in loss compared to the standard
$\lambda$ control. We applied our framework to this problem using recursion to
implement the iteration, resulting in an adaptive that has seven actions:
three actions either increase, decrease, or do not
change $\lambda$ while keeping the value of $x$ produced by the previous
iteration; three actions that are similar but discard the new value of $x$;
and an action that resets the value of $x$ to the best one seen so far. The
adaptive's context is a triple $(b,h_1,h_2)$ where $b$ is the remaining budget
and $h_1$, $h_2$ are indicators that encode whether the loss improved on the
previous step and two steps back, respectively.

We tested the adaptation behavior of our program on three classic benchmark
problems \cite{UCTP}: (1) Rosenbrock, (2) Helical Valley, and (3) Brown \&
Dennis function, using a budget of 5 function evaluations. We adapted the
function for $3 \cdot 10^6$ for different starting points, each one
applying the adaptive procedure to
one of the functions drawn at random starting at a random initial $x$ value
from $[-10,10]^n$ (where the dimension $n$ is 2 in the case of Rosenbrock, 3
for Helical Valley, and 4 for Brown \& Dennis).  After training we evaluated
the averaged scaled reduction in loss (ASRL) of the resulting procedure over
$10^5$ initial $x$ values for each function, where ASRL is simply the average
across all runs of the reduction in loss divided by the loss of the initial
value of $x$.
%
%
Our results indicate a reduction in ASRL over the standard LM for two
functions: Brown \& Denis by $0.01$ and Rosenbrock by $0.004$. For Helical Valley,
the ASRL increases by approximately $0.013$ over the standard LM. The modest
improvements are a due to the fact that standard LM is close to optimal for
these functions given our budget, leaving little room for improvement.
However, the fact that the adaptive program was able to learn to match the
standard LM performance is a great success; it demonstrates the effectiveness of
the adaptation in this particular example and indicates that the ABP idea works
well in practice.
 

\section{Related Work}
\label{sec:relaw}

In \cite{Bauer11} we present a slightly different view of adaptive programming.
There we viewed ABP in the context of a popular object-oriented language
in a much more focused and limited form. For instance, the feedback
type is fixed to be a numeric reward rather than an arbitrary type.
The goal of that work was to support non-expert programmers and shield them
from some of the complexities inherent in any adaptive system.
Conversely, this work's goal is to understand how adaptive values interact with
each other and form adaptive programs in general.

Acar's work on Self-Adjusting Computation \cite{Acar} presents a different
view of adaptive programming where the goal is to produce programs that
adjust automatically in response to any external change to their state.
The aim of this work is more in support of dynamic (online) algorithms
and incremental data structures instead of the feedback-driven program
optimization we present.

The ABP paradigm is inspired by recent work under
the name partial programming in the field of reinforcement learning (RL). RL
\cite{RL-book} is a subfield of artificial intelligence that studies
algorithms for learning to control a system by interacting with the system and
observing positive and negative feedback. RL is intended for situations where
it is difficult to write a program that implements a
high-quality controller, but where it is relatively easy to specify a feedback
signal that indicates how well a controller is performing. Thus,
pure RL can be viewed as an extreme form of ABP where the non-adaptive part of
the program is trivial, requiring the RL mechanisms to solve the full problem
from scratch. As such, successful applications of RL typically require
significant expertise and experience. It is somewhat of an art to formulate a
complex problem at the appropriate abstraction level so that RL will be
successful.

The inherent complexity of pure RL led researchers to develop different
mechanism for humans to provide natural forms of ``advice'' to RL systems, e.g.
in the form of a set of rules that specify hints about good behavior in
various situations \cite{Maclin05}, or example demonstrations of good behavior
by a domain expert \cite{Abbeel04}. However, these forms of advice still
require an RL expert who is very familiar with the underlying algorithms for
their successful application. In addition, the expressiveness of the types of
advice that can be provided are quite limited, particularly in comparison with
programming languages.

The desire to increase the expressiveness of advice provided to RL systems has
resulted in research on hierarchical reinforcement
learning \cite{Dietterich98}. Here a human specifies
behavioral constraints on the desired controller, or program, to be learned
in the form of sub-task, or sub-procedure, hierarchies. The
hierarchies specify potential ways that the high-level problem can be solved
by solving some number of sub-problems, and in turn how those sub-problems can
potentially be broken down and so on.  Not all of the possibilities specified by the hierarchies
will be successful or optimal, but the space of possible controllers can be
dramatically smaller than the original unconstrained problem. Given these
constraints, RL algorithms are often able to solve substantially more complex
problems.

Provided with enough constraints the hierarchies described above can be viewed
as defining programs. This idea was made explicit under the name partial
programming, where a simple language based on hierarchical state machines was
developed to provide guidance to an RL agent \cite{Andre01}. This language was
soon replaced by the development of ALISP \cite{Andre02}, which was a direct
integration of RL with LISP. The key programming construct that ALISP adds to
LISP is the choice point, which is qualitatively similar to an adaptive value
in our framework. The primary focus of work on ALISP has been to develop
adaptation rules for choice points and to understand the conditions under which
learning would be optimal in the limit of infinite runs of the program in an
environment.
%
%

Genetic Programming (GP) is a biologically-inspired approach for optimizing
programs based on a type of randomized search. Thus, like RL applied to ABP,
GP aims to optimize some objective over program runs. However, unlike RL,
GP does not typically exploit the sequential nature of program executions
during the optimization process. Rather, GP is a more generic black box
optimization approach, which typically ignores all aspects of the program
execution, except for the final returned objective value. In this sense,
RL is arguably a more appropriate formalism for ABP since it is specifically
designed for sequential decision making problems. 

A more recent proposal for an adaptive programming language
is A$^2$BL \cite{Simpkins08}, which integrates RL with the agent behavior
language (ABL). The proposal for A$^2$BL can be viewed as an instance of ABP
for a language that is specialized to behavioral-based programming of software
agents. Few details concerning a concrete syntax, implementation, and learning
rules are currently available.

Our work is also inspired by prior work on partial programming. To date, work on
ABP or partial programming has been largely orthogonal to the main
contributions of this paper. Most importantly, the existing work has not
resulted in a well-founded notion of ABP from a programming language
perspective, which has left many open issues regarding the pragmatics and
properties of adaptive programs. Our work is the first to formalize ABP in a
declarative language and to define primitive ABP elements, their combinations,
and programming patterns.

\section{Conclusions and Future Work}
\label{sec:concl}
We presented a generic embedded DSL in Haskell for describing adaptation-based
computations, which is based on the concept of adaptive values.
We demonstrated how standard machine learning scenarios and more general
adaptive programs can be captured via simple computational patterns.
Initial experiments demonstrated the potential of ABP. 
The main goal was to understand what constructs a DSL for adaptive
programming should support and what programming patterns we can identify in
adaptive programs.
In future work we will investigate more formal properties of ABP.
In particular, we want to identify laws for optimizing adaptives with
regard to convergence rate. Furthermore, we intend to extend the language
to patterns found in larger adaptive programs with the aim of solving
harder problems.

The implementation described in this work is available at \cite{haskell-abp}
to the curious reader.

\section*{Acknowledgments}
This work is supported by the National Science Foundation under the grant
CCF-0820286 ``Adaptation-Based Programming''.

\bibliographystyle{eptcs}

\bibliography{abp}

\begin{thebibliography}{10}
\providecommand{\bibitemdeclare}[2]{}
\providecommand{\urlprefix}{Available at }
\providecommand{\url}[1]{\texttt{#1}}
\providecommand{\href}[2]{\texttt{#2}}
\providecommand{\urlalt}[2]{\href{#1}{#2}}
\providecommand{\doi}[1]{doi:\urlalt{http://dx.doi.org/#1}{#1}}
\providecommand{\bibinfo}[2]{#2}

\bibitemdeclare{inproceedings}{Abbeel04}
\bibitem{Abbeel04}
\bibinfo{author}{Pieter Abbeel} \& \bibinfo{author}{Andrew~Y. Ng}
  (\bibinfo{year}{2004}): \emph{\bibinfo{title}{Apprenticeship learning via
  inverse reinforcement learning}}.
\newblock In: {\sl \bibinfo{booktitle}{International Conference on Machine
  Learning}}, pp. \bibinfo{pages}{1--}, \doi{10.1145/1015330.1015430}.

\bibitemdeclare{phdthesis}{Acar}
\bibitem{Acar}
\bibinfo{author}{Umut~A. Acar} (\bibinfo{year}{2005}):
  \emph{\bibinfo{title}{Self-adjusting computation}}.
\newblock Ph.D. thesis, \bibinfo{school}{Carnegie Mellon University},
  \bibinfo{address}{Pittsburgh, PA, USA}.

\bibitemdeclare{inproceedings}{Andre01}
\bibitem{Andre01}
\bibinfo{author}{David Andre} \& \bibinfo{author}{Stuart Russell}
  (\bibinfo{year}{2001}): \emph{\bibinfo{title}{Programmable Reinforcement
  Learning Agents}}.
\newblock In: {\sl \bibinfo{booktitle}{Advances in Neural Information
  Processing Systems}}, pp. \bibinfo{pages}{1019--1024}.

\bibitemdeclare{inproceedings}{Andre02}
\bibitem{Andre02}
\bibinfo{author}{David Andre} \& \bibinfo{author}{Stuart Russell}
  (\bibinfo{year}{2002}): \emph{\bibinfo{title}{State Abstraction for
  Programmable Reinforcement Learning Agents}}.
\newblock In: {\sl \bibinfo{booktitle}{Eighteenth National Conference on
  Artificial Intelligence}}, pp. \bibinfo{pages}{119--125}.

\bibitemdeclare{article}{Auer02}
\bibitem{Auer02}
\bibinfo{author}{Peter Auer}, \bibinfo{author}{Nicol\`{o} Cesa-Bianchi} \&
  \bibinfo{author}{Paul Fischer} (\bibinfo{year}{2002}):
  \emph{\bibinfo{title}{Finite-time Analysis of the Multiarmed Bandit
  Problem}}.
\newblock {\sl \bibinfo{journal}{Machine Learning}} \bibinfo{volume}{27}, pp.
  \bibinfo{pages}{235--256}, \doi{10.1023/A:1013689704352}.

\bibitemdeclare{article}{Bauer11}
\bibitem{Bauer11}
\bibinfo{author}{Tim Bauer}, \bibinfo{author}{Martin Erwig},
  \bibinfo{author}{Alan Fern} \& \bibinfo{author}{Jervis Pinto}
  (\bibinfo{year}{2011}): \emph{\bibinfo{title}{Adaptation-Based Programming in
  Java}}.
\newblock {\sl \bibinfo{journal}{PEPM '11}} , pp.
  \bibinfo{pages}{81--90}\doi{10.1145/1929501.1929518}.

\bibitemdeclare{misc}{haskell-abp}
\bibitem{haskell-abp}
\bibinfo{author}{{Bauer, Tim and Erwig, Martin and Fern, Alan and Pinto,
  Jervis}}: \emph{\bibinfo{title}{{ABP}}}.
\newblock \bibinfo{note}{\url{http://web.engr.oregonstate.edu/~bauertim/abp/}}.

\bibitemdeclare{book}{Bishop06}
\bibitem{Bishop06}
\bibinfo{author}{Christopher Bishop} (\bibinfo{year}{2006}):
  \emph{\bibinfo{title}{Pattern Recognition and Machine Learning}}.
\newblock \bibinfo{publisher}{Springer}.

\bibitemdeclare{inproceedings}{Dietterich98}
\bibitem{Dietterich98}
\bibinfo{author}{Thomas Dietterich} (\bibinfo{year}{1998}):
  \emph{\bibinfo{title}{The {MAXQ} Method for Hierarchical Reinforcement
  Learning}}.
\newblock In: {\sl \bibinfo{booktitle}{International Conference on Machine
  Learning}}, pp. \bibinfo{pages}{118--126}.

\bibitemdeclare{inproceedings}{Lagoudakis00}
\bibitem{Lagoudakis00}
\bibinfo{author}{Michail Lagoudakis} \& \bibinfo{author}{Michael Littman}
  (\bibinfo{year}{2000}): \emph{\bibinfo{title}{Algorithm Selection using
  Reinforcement Learning}}.
\newblock In: {\sl \bibinfo{booktitle}{International Conference on Machine
  Learning}}, pp. \bibinfo{pages}{511--518}.

\bibitemdeclare{article}{Lai85}
\bibitem{Lai85}
\bibinfo{author}{T.~Lai} \& \bibinfo{author}{H.~Robbins}
  (\bibinfo{year}{1985}): \emph{\bibinfo{title}{Asymptotically efficient
  adaptive allocation rules}}.
\newblock {\sl \bibinfo{journal}{Advances in Applied Mathematics}}
  \bibinfo{volume}{6}, pp. \bibinfo{pages}{4--22},
  \doi{10.1109/TAC.1987.1104491}.

\bibitemdeclare{article}{Lev44}
\bibitem{Lev44}
\bibinfo{author}{K.~Levenberg} (\bibinfo{year}{1944}): \emph{\bibinfo{title}{A
  method for the solution of certain non-linear problems in least squares}}.
\newblock {\sl \bibinfo{journal}{Applied Math Quarterly}} , pp.
  \bibinfo{pages}{164--168}.

\bibitemdeclare{inproceedings}{Littman94}
\bibitem{Littman94}
\bibinfo{author}{Michael Littman} (\bibinfo{year}{1994}):
  \emph{\bibinfo{title}{Markov Games as a Framework for Multi-Agent
  Reinforcement Learning}}.
\newblock In: {\sl \bibinfo{booktitle}{International Conference on Machine
  Learning}}, pp. \bibinfo{pages}{157--163}.

\bibitemdeclare{inproceedings}{Maclin05}
\bibitem{Maclin05}
\bibinfo{author}{R.~Maclin}, \bibinfo{author}{J.~Shavlik},
  \bibinfo{author}{L.~Torrey}, \bibinfo{author}{T.~Walker} \&
  \bibinfo{author}{E.~Wild} (\bibinfo{year}{2005}):
  \emph{\bibinfo{title}{Giving Advice about Preferred Actions to Reinforcement
  Learners Via Knowledge-Based Kernel Regression}}.
\newblock In: {\sl \bibinfo{booktitle}{Proceedings of the Twentieth National
  Conference on Artificial Intelligence}}, pp. \bibinfo{pages}{819--824}.

\bibitemdeclare{article}{MAR63}
\bibitem{MAR63}
\bibinfo{author}{D.~Marquardt} (\bibinfo{year}{1963}): \emph{\bibinfo{title}{An
  algorithm for least-squares estimation of nonlinear parameters}}.
\newblock {\sl \bibinfo{journal}{SIAM Journal of Applied Mathematics}} .

\bibitemdeclare{techreport}{UCTP}
\bibitem{UCTP}
\bibinfo{author}{H.~B. Nielsen} (\bibinfo{year}{2000}):
  \emph{\bibinfo{title}{UCTP - Test Problems for Unconstrained Optimization}}.
\newblock \bibinfo{type}{Technical Report}, \bibinfo{institution}{Technical
  University of Denmark}.

\bibitemdeclare{article}{Robbins52}
\bibitem{Robbins52}
\bibinfo{author}{H.~Robbins} (\bibinfo{year}{1952}): \emph{\bibinfo{title}{Some
  Aspects of the Sequential Design of Experiments}}.
\newblock {\sl \bibinfo{journal}{Bulletin of the American Mathematical
  Society}} \bibinfo{volume}{58}, pp. \bibinfo{pages}{527--535},
  \doi{10.1090/S0002-9904-1952-09620-8}.

\bibitemdeclare{inproceedings}{Ruv08}
\bibitem{Ruv08}
\bibinfo{author}{Paul Ruvolo}, \bibinfo{author}{Ian~R. Fasel} \&
  \bibinfo{author}{Javier~R. Movellan} (\bibinfo{year}{2008}):
  \emph{\bibinfo{title}{Optimization on a Budget: A Reinforcement Learning
  Approach}}.
\newblock In: {\sl \bibinfo{booktitle}{Neural Information Processing Symposium
  (NIPS)}}, pp. \bibinfo{pages}{1385--1392}.

\bibitemdeclare{inproceedings}{SPC08}
\bibitem{SPC08}
\bibinfo{author}{T.~Schrijvers}, \bibinfo{author}{S.~{Peyton-Jones}} \&
  \bibinfo{author}{M.~Chakravarty} (\bibinfo{year}{2008}):
  \emph{\bibinfo{title}{{Type Checking with Open Type Functions}}}.
\newblock In: {\sl \bibinfo{booktitle}{{ACM Int.\ Conf.\ on Functional
  Programming}}}, pp. \bibinfo{pages}{51--62}, \doi{10.1145/1411203.1411215}.

\bibitemdeclare{inproceedings}{Simpkins08}
\bibitem{Simpkins08}
\bibinfo{author}{Christopher Simpkins}, \bibinfo{author}{Sooraj Bhat},
  \bibinfo{author}{Michael Mateas} \& \bibinfo{author}{Charles Isbell}
  (\bibinfo{year}{2008}): \emph{\bibinfo{title}{Toward Adaptive Programming:
  Integrating Reinforcement Learning into a Programming Language}}.
\newblock In: {\sl \bibinfo{booktitle}{ACM Conference on Object-Oriented
  Programming Systems, Languages and Applications}}, pp.
  \bibinfo{pages}{603--614}, \doi{10.1145/1449955.1449811}.

\bibitemdeclare{book}{RL-book}
\bibitem{RL-book}
\bibinfo{author}{Richard Sutton} \& \bibinfo{author}{Andrew Barto}
  (\bibinfo{year}{2000}): \emph{\bibinfo{title}{Reinforcement Learning: An
  Introduction}}.
\newblock \bibinfo{publisher}{MIT Press}.

\bibitemdeclare{book}{Tho91}
\bibitem{Tho91}
\bibinfo{author}{S.~Thompson} (\bibinfo{year}{1991}):
  \emph{\bibinfo{title}{{Type Theory and Functional Programming}}}.
\newblock \bibinfo{publisher}{{Ad\-di\-son-Wes\-ley}},
  \bibinfo{address}{Redwood City, CA, USA}.

\end{thebibliography}
\end{document}